\newtheorem{definition}{Definition}
\newtheorem{proposition}[definition]{Proposition}
\newtheorem{lemma}[definition]{Lemma}
\newtheorem{corollary}[definition]{Corollary}
\newtheorem{observation}[definition]{Observation}
\newtheorem{question}[definition]{Question}
\def\squareforqed{\hbox{\rlap{$\sqcap$}$\sqcup$}}
\def\qed{\ifmmode\squareforqed\else{\unskip\nobreak\hfil
\penalty50\hskip1em\null\nobreak\hfil\squareforqed
\parfillskip=0pt\finalhyphendemerits=0\endgraf}\fi}
\def\endenv{\ifmmode\;\else{\unskip\nobreak\hfil
\penalty50\hskip1em\null\nobreak\hfil\;
\parfillskip=0pt\finalhyphendemerits=0\endgraf}\fi}
\newenvironment{proof}{\noindent \textbf{{Proof~} }}{\qed}
\newenvironment{remark}{\noindent \textbf{{Remark~}}}{}
\mathchardef\ordinarycolon\mathcode`\:
\def\vcentcolon{\mathrel{\mathop\ordinarycolon}}
\newcommand{\nc}{\newcommand}
\nc{\rnc}{\renewcommand}
\nc{\beq}{\begin{equation}}
\nc{\eeq}{{\end{equation}}}
\nc{\beqa}{\begin{eqnarray}}
\nc{\eeqa}{\end{eqnarray}}
\nc{\lbar}[1]{\overline{#1}}
\nc{\bra}[1]{\langle#1|}
\nc{\ket}[1]{|#1\rangle}
\nc{\ketbra}[2]{|#1\rangle\!\langle#2|}
\nc{\braket}[2]{\langle#1|#2\rangle}
\nc{\proj}[1]{| #1\rangle\!\langle #1 |}
\nc{\avg}[1]{\langle#1\rangle}
\nc{\Rank}{\operatorname{Rank}}
\nc{\smfrac}[2]{\mbox{$\frac{#1}{#2}$}}
\nc{\tr}{\operatorname{Tr}}
\nc{\ox}{\otimes}
\nc{\dg}{\dagger}
\nc{\dn}{\downarrow}
\nc{\cA}{{\cal A}}
\nc{\cB}{{\cal B}}
\nc{\cC}{{\cal C}}
\nc{\cD}{{\cal D}}
\nc{\cE}{{\cal E}}
\nc{\cF}{{\cal F}}
\nc{\cG}{{\cal G}}
\nc{\cH}{{\cal H}}
\nc{\cI}{{\cal I}}
\nc{\cJ}{{\cal J}}
\nc{\cK}{{\cal K}}
\nc{\cL}{{\cal L}}
\nc{\cM}{{\cal M}}
\nc{\cN}{{\cal N}}
\nc{\cO}{{\cal O}}
\nc{\cP}{{\cal P}}
\nc{\cQ}{{\cal Q}}
\nc{\cR}{{\cal R}}
\nc{\cS}{{\cal S}}
\nc{\cT}{{\cal T}}
\nc{\cU}{{\cal U}}
\nc{\cX}{{\cal X}}
\nc{\cY}{{\cal Y}}
\nc{\cZ}{{\cal Z}}
\nc{\csupp}{{\operatorname{csupp}}}
\nc{\qsupp}{{\operatorname{qsupp}}}
\nc{\var}{{\operatorname{var}}}
\nc{\rar}{\rightarrow}
\nc{\lrar}{\longrightarrow}
\nc{\polylog}{{\operatorname{polylog}}}
\nc{\wt}{{\operatorname{wt}}}
\nc{\av}[1]{{\left\langle {#1} \right\rangle}}
\nc{\RR}{{{\mathbb R}}}
\nc{\CC}{{{\mathbb C}}}
\nc{\FF}{{{\mathbb F}}}
\nc{\NN}{{{\mathbb N}}}
\nc{\ZZ}{{{\mathbb Z}}}
\nc{\PP}{{{\mathbb P}}}
\nc{\QQ}{{{\mathbb Q}}}
\nc{\UU}{{{\mathbb U}}}
\nc{\EE}{{{\mathbb E}}}
\nc{\id}{{\operatorname{id}}}
\nc{\CHSH}{{\operatorname{CHSH}}}
\nc{\be}{\begin{equation}}
\nc{\ee}{{\end{equation}}}
\nc{\bea}{\begin{eqnarray}}
\nc{\eea}{\end{eqnarray}}
\nc{\Hom}[2]{\mbox{Hom}(\CC^{#1},\CC^{#2})}
\nc{\rU}{\mbox{U}}
\nc{\ob}[1]{#1}
\nc{\SEP}{{\text{SEP}}}
\nc{\NS}{{\text{NS}}}
\nc{\LOCC}{{\text{LOCC}}}
\nc{\PPT}{{\text{PPT}}}
\nc{\EXT}{{\text{EXT}}}
\nc{\Sym}{{\operatorname{Sym}}}
\nc{\ERLO}{{E_{\text{r,LO}}}}
\nc{\ERLOCC}{{E_{\text{r,LOCC}}}}
\nc{\ERPPT}{{E_{\text{r,PPT}}}}
\nc{\ERLOCCinfty}{{E^{\infty}_{\text{r,LOCC}}}}
\nc{\Aram}{{\operatorname{\sf A}}}
\begin{document}

\title{Fully Quantum Arbitrarily Varying Channels: \protect\\ Random Coding Capacity and Capacity Dichotomy}

\date{9 January 2018}

\author{%
  \IEEEauthorblockN{Holger Boche and Christian Deppe}
  \IEEEauthorblockA{Theoretische Informationstechnik\\
                    and Nachrichtentechnik\\
                    TU M\"unchen, Germany\\
                    \{boche,\,christian.deppe\}@tum.de}
  \and 
  \IEEEauthorblockN{Janis N\"otzel}
  \IEEEauthorblockA{Theoretische Nachrichtentechnik\\
                    TU Dresden, Germany\\
                    janis.noetzel@tu-dresden.de}
  \and 
  \IEEEauthorblockN{Andreas Winter}
  \IEEEauthorblockA{ICREA \&{} Departament de F\'isica\\
                    Grup d'Informaci\'o Qu\`antica\\
                    Universitat Aut\`onoma de Barcelona, Spain\\
                    andreas.winter@uab.cat}
}


\maketitle

\begin{abstract}
We consider a model of communication via a fully quantum jammer channel 
with quantum jammer, quantum sender and quantum receiver, which we dub
\emph{quantum arbitrarily varying channel (QAVC)}. Restricting
to finite dimensional user and jammer systems, we show, using
permutation symmetry and a de Finetti reduction, how the random
coding capacity (classical and quantum) of the QAVC is reduced
to the capacity of a naturally associated compound channel, which is 
obtained by restricting the jammer to i.i.d.~input states. 
Furthermore, we demonstrate that the shared randomness required is at
most logarithmic in the block length, using a random matrix tail bound.
This implies a dichotomy theorem: either the classical capacity of the
QAVC is zero, and then also the quantum capacity is zero, or 
each capacity equals its random coding variant.
\end{abstract}

\section{Fully quantum AVC and random codes}

We consider a simple, fully quantum model of arbitrarily varying channel
(QAVC). Namely, we have three agents, Alice (sender), Bob (receiver) and 
Jamie (jammer), each controlling a quantum system $A$, $B$ and $J$, respectively. 
The channel is simply a completely positive and trace preserving (cptp) map
$\cN : \cL(A\ox J) \longrightarrow \cL(B)$, and we assume it to
be memoryless on blocks of length $\ell$, 
i.e.~$\cN^{\ox \ell} : \cL(A^\ell\ox J^\ell) \longrightarrow \cL(B^\ell)$,
with $A^\ell = A \ox \cdots \ox A$ ($\ell$ times), etc. However, crucially,
neither Alice's nor Jamie's input states need to be tensor product
or even separable states.
We shall assume throughout that all three Hilbert spaces $A$, $B$
and $J$ have finite dimension, $|A|$, $|B|$, $|J| < \infty$.
The previously introduced AVQC model of Ahlswede and Blinovsky 
\cite{AhlswedeBlinovsky:AVQC-C}, and more generally Ahlswede 
\emph{et al.} \cite{ABBN:AVQC-Q}, is obtained by channels 
$\cN$ that first dephase the input $J$ in a fixed basis, so that the
choices of the jammer are effectively reduced to basis states
$\proj{j}$ of $J$ and their convex combinations.
Note that this generalises the classical AVC, which is simply
a channel with input alphabet $\cX \times \cS$ and output
alphabet $\cY$, given by transition probabilities $N(y|x,s)$,
and such a channel can always be interpreted as a cptp map.
This model has been considered in \cite{e-ass-Q,e-ass-C}, however
in those works principally from the point of view that Jamie is
helping Alice and Bob, passively, by providing a suitable input
state to $J$. Contrary to the classical AVC and the AVQC considered
in \cite{AhlswedeBlinovsky:AVQC-C,ABBN:AVQC-Q}, where the jammer
effectively always selects a tensor product channel between Alice 
and Bob, the fact that we allow general quantum inputs on $J^\ell$, 
including entangled states, permits Jamie to induce non-classical
correlations between the different channel systems. These correlations, 
as was observed in \cite{e-ass-Q,e-ass-C}, are not only highly nontrivial,
but can also have a profound impact on the communication capacity
of the channel between Alice and Bob.
In the present context, however, Jamie is fundamentally an adversary.

Define a \emph{(deterministic) classical code} for $\cN$ of block length $\ell$ 
as a collection $\cC = \{(\rho_m,D_m):m=1,\ldots,M\}$ of states
$\rho_m \in \cS(A^\ell)$ and POVM elements $D_m \geq 0$ acting on
$B^n$, such that $\sum_{m=1}^M D_m = \1$. Its rate is defined as
$\frac{1}{\ell}\log M$, the number of bits encoded per channel use.
Its error probability is defined as the average over uniformly 
distributed messages and with respect to a state $\sigma$ on $J^\ell$: 
\[
  P_{\text{err}}(\cC,\sigma) 
       := \frac1M \sum_{m=1}^M \tr \left(\cN^{\ox\ell}(\rho_m\ox\sigma)\right)\!(\1-D_m).
\]

For the transmission of quantum information, define a
\emph{(deterministic) quantum code} for $\cN$ of block length $\ell$ 
as a pair $\cQ = (\cE,\cD)$ of cptp maps
$\cE:\cL(\CC^L) \longrightarrow \cL(A^\ell)$ and
$\cD:\cL(B^\ell) \longrightarrow \cL(\CC^L)$. Its rate is
$\frac{1}{\ell}\log L$, the number of qubits encoded per channel use,
and the error is quantified, with respect to a state $\sigma$ on $J^\ell$, 
as the ``infidelity''
\[
  \widehat{F}(\cQ,\sigma) 
        := 1- \tr \bigl( (\id\ox\cD\circ\cN^{\ox\ell}_{\sigma}\circ\cE)\Phi_L \bigr)\!\cdot\!\Phi_L,
\]
with the maximally entangled state $\Phi_L = \frac1L \sum_{ij} \ketbra{ii}{jj}$.
Here, we have introduced the channels
$\cN_\sigma:\cL(A)\longrightarrow\cL(B)$ defined by fixing the
jammer's state to $\sigma$, $\cN_\sigma(\rho) := \cN(\rho\ox\sigma)$.

Note that we use the language of ``deterministic'' code, although
in quantum information this is indistinguishable from stochastic
encoders; it is meant to differentiate from ``random'' codes, which use
shared correlation:
A \emph{random classical [quantum] code} for $\cN$ of block length $\ell$ consists of a
random variable $\lambda$ with a well-defined distribution and 
a family of deterministic codes $\cC_\lambda$ [$\cQ_\lambda$]. 
%
The error probability if $(\cC_\lambda)$, always with respect to a 
state $\sigma$ on $J^\ell$, is simply the expectation over $\lambda$, 
i.e.~$\EE_\lambda P_{\text{err}}(\cC_\lambda,\sigma)$. The error of the 
random quantum code is similarly $\EE_\lambda \widehat{F}(\cQ_\lambda,\sigma)$.

The operational interpretation of the random code model 
is that Alice and Bob share knowledge of the random variable
$\lambda$, and use $\cC_\lambda$ accordingly, but that Jamie is ignorant
of it. This shared randomness is thus a valuable resource, which for
random codes is considered freely available, whose amount, however,
we would like to control at the same time.

The capacities associated to these code concepts are defined as usual, 
as the maximum achievable rate as block length goes to infinity and the 
error goes to zero:
\begin{align*}
  C_{\text{det}}(\cN)  &:= \limsup_{\ell\rightarrow\infty} \frac{1}{\ell}\log M
                               \text{ s.t. }
                               \sup_{\sigma} P_{\text{err}}(\cC,\sigma) \rightarrow 0, \\
  C_{\text{rand}}(\cN) &:= \limsup_{\ell\rightarrow\infty} \frac{1}{\ell}\log M
                               \text{ s.t. }
                               \sup_{\sigma} \EE_\lambda P_{\text{err}}(\cC_\lambda,\sigma) \rightarrow 0, \\
  Q_{\text{det}}(\cN)  &:= \limsup_{\ell\rightarrow\infty} \frac{1}{\ell}\log L
                               \text{ s.t. }
                               \sup_{\sigma} \widehat{F}(\cQ,\sigma) \rightarrow 0, \\
  Q_{\text{rand}}(\cN) &:= \limsup_{\ell\rightarrow\infty} \frac{1}{\ell}\log L
                               \text{ s.t. }
                               \sup_{\sigma} \EE_\lambda \widehat{F}(\cQ,\sigma) \rightarrow 0.
\end{align*}

If in the above error maximisations Jamie is restricted to tensor
power states $\sigma^{\ox\ell}$, the QAVC model becomes a compound channel: 
$\cN^{\ox\ell}_{\sigma^{\ox\ell}} = (\cN_\sigma)^{\ox\ell}$, $\sigma\in\cS(J)$.
Its classical and quantum capacities are denoted $C(\{\cN_\sigma\}_\sigma)$
and $Q(\{\cN_\sigma\}_\sigma)$, respectively.

\section{Random coding capacities: \protect\\ from QAVC to its compound channel}
By definition, (see also \cite{BB:compound,BBN:q-compound} and \cite{ABBN:AVQC-Q})
\begin{equation}\begin{split}
  \label{eq:cap-chains}
  C_{\text{det}}(\cN) &\leq C_{\text{rand}}(\cN) \leq C(\{\cN_\sigma\}_\sigma), \text{ and}\\
  Q_{\text{det}}(\cN) &\leq Q_{\text{rand}}(\cN) \leq Q(\{\cN_\sigma\}_\sigma).
\end{split}\end{equation}
Here, we show that for the random capacity, the rightmost inequalities
are identities, by proving bounds in the opposite direction.
For the quantum capacity, this was done in \cite[Appendix~A]{e-ass-Q}.
To present the argument, define the permutation operator $U^\pi$ acting on the 
tensor power $A^\ell$ as permuting the subsystems, for a permutation $\pi\in S_\ell$:
\[
  U^\pi\bigl(\ket{\alpha_1}\ket{\alpha_2}\cdots\ket{\alpha_\ell}\bigr)
       = \ket{\alpha_{\pi^{-1}(1)}}\ket{\alpha_{\pi^{-1}(2)}}\cdots\ket{\alpha_{\pi^{-1}(\ell)}},
\]
which extends uniquely by linearity. This is a unitary representation of
the symmetric group, which is defined for any Hilbert space. 
The quantum channel obtained by the conjugation action of 
$U^\pi$ is denoted $\cU_\pi(\alpha) = U^\pi \alpha U^{\pi\dagger}$.

\begin{proposition}
  \label{prop:compound-to-AVC:q}
  Let $\cQ=(\cE,\cD)$ be a quantum code for the compound 
  channel $\{\cN_\sigma\}_{\sigma\in\cS(J)}$ at block length $\ell$ 
  of size $L$ and with fidelity $1-\epsilon$, i.e.~for all $\sigma\in\cS(J)$,
  \[
    \widehat{F}\left(\cQ,\sigma^{\ox\ell}\right)
        =    1-\tr \bigl( (\id\ox\cD\circ\cN^{\ox\ell}_{\sigma}\circ\cE)\Phi_L \bigr)\!\cdot\!\Phi_L
        \leq \epsilon.
  \]
  Then, the random quantum code $(\cQ_\pi)_{\pi\in S_\ell}$ with a uniformly
  distributed random permutation $\pi$ of $[\ell]$, defined by
  \[
    \cQ_\pi = (\cU_\pi\circ\cE,\cD\circ\cU_{\pi^{-1}}), 
  \]
  has infidelity $\EE_\pi \widehat{F}\left(\cQ_\pi,\sigma^{\ox\ell}\right)
  \leq \epsilon' \leq \epsilon (\ell+1)^{|J|^2}$
  for the QAVC $\cN$.
\end{proposition}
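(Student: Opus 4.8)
The plan is to reduce, in two stages, the infidelity of the symmetrised random code against an \emph{arbitrary} jammer state $\sigma\in\cS(J^\ell)$ to the given compound‑code infidelity $\widehat F(\cQ,\cdot)$, which by hypothesis never exceeds $\epsilon$ on i.i.d.\ states. Concretely I will bound the relevant QAVC quantity $\sup_{\sigma\in\cS(J^\ell)}\EE_\pi\widehat F(\cQ_\pi,\sigma)$, writing $\cN^{\ox\ell}_X$ for the map $\rho\mapsto\cN^{\ox\ell}(\rho\ox X)$ with $X$ any operator on $J^\ell$ (so $\cN^{\ox\ell}_{\sigma^{\ox\ell}}=(\cN_\sigma)^{\ox\ell}$, consistent with the excerpt).

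\emph{Step 1 (symmetrisation).} The channel $\cN^{\ox\ell}$ is covariant under \emph{simultaneous} permutations of its $\ell$ slots: letting $\cU_\pi^{A},\cU_\pi^{J},\cU_\pi^{B}$ be the conjugation actions on $A^\ell,J^\ell,B^\ell$, one checks $\cN^{\ox\ell}\bigl(\cU_\pi^{A}(\rho)\ox\cU_\pi^{J}(\tau)\bigr)=\cU_\pi^{B}\bigl(\cN^{\ox\ell}(\rho\ox\tau)\bigr)$ because the same $\pi$ acts on all three registers. Substituting $\tau=\cU_{\pi^{-1}}^{J}(\sigma)$ and using $\cU_{\pi^{-1}}^B\circ\cU_\pi^B=\id$, a one‑line computation with $\cQ_\pi=(\cU_\pi\circ\cE,\cD\circ\cU_{\pi^{-1}})$ gives $\cD\circ\cU_{\pi^{-1}}\circ\cN^{\ox\ell}_{\sigma}\circ\cU_\pi\circ\cE=\cD\circ\cN^{\ox\ell}_{\cU_{\pi^{-1}}^{J}(\sigma)}\circ\cE$, hence $\widehat F(\cQ_\pi,\sigma)=\widehat F\bigl(\cQ,\cU_{\pi^{-1}}^{J}(\sigma)\bigr)$. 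Since $\sigma\mapsto\widehat F(\cQ,\sigma)$ is affine (it is $1$ minus a trace that is linear in $\sigma$), averaging over the uniform $\pi\in S_\ell$ and reindexing $\pi\mapsto\pi^{-1}$ yields $\EE_\pi\widehat F(\cQ_\pi,\sigma)=\widehat F(\cQ,\bar\sigma)$, where $\bar\sigma:=\frac1{\ell!}\sum_{\pi\in S_\ell}\cU_\pi^{J}(\sigma)$ is a permutation‑invariant state on $J^\ell$. So it suffices to control $\widehat F(\cQ,\cdot)$ on the symmetric subspace of $J^\ell$.

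\emph{Step 2 (de Finetti reduction).} Here I invoke the postselection/de Finetti bound: there is a probability measure $\mathrm{d}\sigma$ on $\cS(J)$ such that every permutation‑invariant $\bar\sigma\in\cS(J^\ell)$ satisfies $\bar\sigma\le(\ell+1)^{|J|^2}\tau$ with $\tau:=\int\sigma^{\ox\ell}\,\mathrm{d}\sigma$. Extend $\widehat F(\cQ,\cdot)$ to a linear functional on all positive operators by $\widetilde F(\cQ,X):=\tr X-\tr\bigl[(\id\ox\cD\circ\cN^{\ox\ell}_{X}\circ\cE)\Phi_L\cdot\Phi_L\bigr]$; since $\cE,\cN^{\ox\ell},\cD$ are trace preserving (so the operator in brackets is positive with trace $\tr X$) and $0\le\Phi_L\le\1$, we get $0\le\widetilde F(\cQ,X)\le\tr X$ for $X\ge0$, and $\widetilde F$ coincides with $\widehat F$ on states. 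Applying this to $X=(\ell+1)^{|J|^2}\tau-\bar\sigma\ge0$ gives $\widehat F(\cQ,\bar\sigma)\le(\ell+1)^{|J|^2}\,\widetilde F(\cQ,\tau)$, while linearity and the hypothesis give $\widetilde F(\cQ,\tau)=\int\widehat F\bigl(\cQ,\sigma^{\ox\ell}\bigr)\,\mathrm{d}\sigma\le\epsilon$. Chaining with Step 1 yields $\sup_{\sigma\in\cS(J^\ell)}\EE_\pi\widehat F(\cQ_\pi,\sigma)\le\epsilon(\ell+1)^{|J|^2}$.

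\emph{Main obstacle.} Step 1 is mostly bookkeeping, the only delicate point being that a single permutation $\pi$ must act on the $A$‑ and $J$‑factors alike so that the permutation action factorises over those registers; the conceptual heart is Step 2, where one must convert the \emph{operator} inequality $\bar\sigma\le(\ell+1)^{|J|^2}\tau$ supplied by the de Finetti reduction into a \emph{scalar} bound on the infidelity. This is exactly why it is essential that $\widehat F(\cQ,\cdot)$ extends to a positive linear functional $\widetilde F$ on the cone with the matching normalisation $\widetilde F(\cQ,X)\le\tr X$; once that is set up, the polynomial prefactor $(\ell+1)^{|J|^2}$ drops out of the postselection bound with no further work.
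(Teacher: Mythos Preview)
Your proof is correct and follows the same two-step strategy as the paper: first use the $S_\ell$-covariance of $\cN^{\ox\ell}$ to reduce to a permutation-symmetric jammer state, then apply the Christandl--K\"onig--Renner de Finetti reduction $\bar\sigma \le (\ell+1)^{|J|^2}\int\sigma^{\ox\ell}\,\mathrm{d}\sigma$ and exploit complete positivity to convert the operator inequality into the scalar bound. Your explicit extension of $\widehat F(\cQ,\cdot)$ to a positive linear functional $\widetilde F$ on the cone is a clean way to justify that last conversion; the paper phrases the same step more tersely as ``using complete positivity of $\cN$'' when it writes out the analogous argument for Proposition~\ref{prop:compound-to-AVC}.
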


\begin{proposition}
  \label{prop:compound-to-AVC}
  Let $\cC = \{(\rho_m,D_m):m=1,\ldots,M\}$ be a code of block
  length $\ell$ for the compound channel $\{\cN_\sigma\}_{\sigma\in\cS(J)}$ 
  with error probability $\epsilon$, i.e.~for all $\sigma\in\cS(J)$,
  \[
    P_{\text{err}}(\cC,\sigma^{\ox\ell}) 
         =    \frac1M \sum_{m=1}^M \tr \left( \cN_\sigma^{\ox\ell}(\rho_m)(\1-D_m) \right)
         \leq \epsilon.
  \]
  Then, the random code $(\cC_\pi)_{\pi\in S_\ell}$ with a uniformly
  distributed random permutation $\pi$ of $[\ell]$, defined by
  \[
    \cC_\pi := \{ (U^\pi\rho_m U^{\pi\dagger},U^\pi D_m U^{\pi\dagger}) : m=1,\ldots,M\},
  \]
  has error probability $\epsilon' \leq \epsilon (\ell+1)^{|J|^2}$
  for the QAVC $\cN$.
\end{proposition}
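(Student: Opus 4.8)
The plan is to run, in the classical setting, the same two-step argument that proves the quantum statement (Proposition~\ref{prop:compound-to-AVC:q}, cf.\ \cite[Appendix~A]{e-ass-Q}): first exploit the permutation covariance of $\cN^{\ox\ell}$ to turn the random permutation in $(\cC_\pi)$ into a \emph{symmetrisation of the jammer state}, and then use a de Finetti reduction to dominate any permutation-invariant state on $J^\ell$ by a polynomial multiple of a fixed mixture of i.i.d.\ states.

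\textbf{Step 1: covariance reduces to a symmetric jammer.} First I would note that $\cN^{\ox\ell}$, as a tensor power, commutes with the symmetric group: for $\pi\in S_\ell$ and $\xi\in\cS(A^\ell\ox J^\ell)$,
\[
  \cN^{\ox\ell}\bigl((U^\pi\ox U^\pi)\,\xi\,(U^\pi\ox U^\pi)^\dagger\bigr) = U^\pi\,\cN^{\ox\ell}(\xi)\,U^{\pi\dagger},
\]
where on the left the two $U^\pi$ act on $A^\ell$ and $J^\ell$, and on the right on $B^\ell$. Taking $\xi = \rho_m\ox(U^{\pi\dagger}\sigma U^\pi)$, so that $(U^\pi\ox U^\pi)\,\xi\,(U^\pi\ox U^\pi)^\dagger = (U^\pi\rho_m U^{\pi\dagger})\ox\sigma$, and then cancelling the $U^\pi$ appearing on $B^\ell$ against those inside $U^\pi D_m U^{\pi\dagger}$ via cyclicity of the trace, one obtains
\[
  P_{\text{err}}(\cC_\pi,\sigma) = P_{\text{err}}\bigl(\cC,\, U^{\pi\dagger}\sigma U^\pi\bigr)
  \quad\text{for every }\sigma\in\cS(J^\ell).
\]
As $\tau\mapsto P_{\text{err}}(\cC,\tau)$ is linear, averaging over a uniformly random $\pi$ gives $\EE_\pi P_{\text{err}}(\cC_\pi,\sigma) = P_{\text{err}}(\cC,\overline\sigma)$ with $\overline\sigma := \frac{1}{\ell!}\sum_{\pi\in S_\ell} U^{\pi\dagger}\sigma U^\pi$, which is a permutation-invariant state on $J^\ell$. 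Since this identity holds for every $\sigma$, it now suffices to bound $P_{\text{err}}(\cC,\overline\sigma)$ over all permutation-invariant $\overline\sigma$.

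\textbf{Step 2: de Finetti reduction.} Next I would invoke the de Finetti reduction (the post-selection technique of Christandl--K\"onig--Renner): there is a \emph{universal} probability measure $\mathrm{d}\sigma$ on $\cS(J)$ --- in particular not depending on $\overline\sigma$ --- such that every permutation-invariant $\overline\sigma\in\cS(J^\ell)$ satisfies the operator inequality
\[
  \overline\sigma \;\leq\; \binom{\ell+|J|^2-1}{\ell}\int\sigma^{\ox\ell}\,\mathrm{d}\sigma
  \;\leq\; (\ell+1)^{|J|^2}\int\sigma^{\ox\ell}\,\mathrm{d}\sigma .
\]
The functional $\tau\mapsto P_{\text{err}}(\cC,\tau)$ extends linearly to Hermitian operators on $J^\ell$ and is \emph{positive}: for $\tau\geq0$ we have $\rho_m\ox\tau\geq0$, hence $\cN^{\ox\ell}(\rho_m\ox\tau)\geq0$ by complete positivity, and $\1-D_m\geq0$, so $\tr\bigl(\cN^{\ox\ell}(\rho_m\ox\tau)(\1-D_m)\bigr)\geq0$. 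Being positive and linear, it is monotone for the operator order; applying it to the displayed inequality, exchanging it with the integral, and using the hypothesis $P_{\text{err}}(\cC,\sigma^{\ox\ell})\leq\epsilon$ for all $\sigma\in\cS(J)$ together with $\int\mathrm{d}\sigma=1$, I obtain
\[
  \EE_\pi P_{\text{err}}(\cC_\pi,\sigma) = P_{\text{err}}(\cC,\overline\sigma)
  \;\leq\; (\ell+1)^{|J|^2}\!\int P_{\text{err}}(\cC,\sigma^{\ox\ell})\,\mathrm{d}\sigma
  \;\leq\; \epsilon\,(\ell+1)^{|J|^2},
\]
and since this holds uniformly over all $\sigma$ on $J^\ell$, the random code $(\cC_\pi)$ has error probability $\epsilon'\leq\epsilon(\ell+1)^{|J|^2}$ for the QAVC.

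\textbf{Where the work is.} Steps~1 and~2 are essentially bookkeeping once the de Finetti reduction is at hand, so the only substantive ingredient is that reduction. The point needing care is the \emph{universality} of the measure $\mathrm{d}\sigma$: if it were allowed to depend on $\overline\sigma$ one could not pull $P_{\text{err}}(\cC,\sigma^{\ox\ell})\leq\epsilon$ outside the integral. The post-selection technique supplies exactly this universal bound, with the clean multiplicative factor $\binom{\ell+|J|^2-1}{\ell}\leq(\ell+1)^{|J|^2}$; a plain finite de Finetti theorem would instead leave a trace-norm remainder to be estimated. This is the same reduction that underlies Proposition~\ref{prop:compound-to-AVC:q}.
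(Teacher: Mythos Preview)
Your proof is correct and follows essentially the same route as the paper's: you use the $S_\ell$-covariance of $\cN^{\ox\ell}$ to replace the random permutation on the code by a symmetrisation of the jammer's state, and then apply the Christandl--K\"onig--Renner de Finetti reduction together with the positivity (operator-monotonicity) of the error functional to reduce to the i.i.d.\ compound bound. The only cosmetic difference is that you write the symmetrised state as $\overline\sigma=\frac{1}{\ell!}\sum_\pi U^{\pi\dagger}\sigma U^\pi$ whereas the paper writes $\zeta'=\EE_\pi U^\pi\zeta U^{\pi\dagger}$, which is of course the same average; your added emphasis on the universality of the de Finetti measure and on why $\tau\mapsto P_{\text{err}}(\cC,\tau)$ is operator-monotone makes explicit exactly the points the paper uses implicitly.
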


\begin{proof}
We only prove Proposition~\ref{prop:compound-to-AVC}, since 
Proposition~\ref{prop:compound-to-AVC:q} has been argued in
\cite[Appendix~A]{e-ass-Q}, with analogous proofs.
For an arbitrary state $\zeta$ on $J^\ell$, the error probability
of the random code $(\cC_\pi)_{\pi\in S_\ell}$ can be written as
\begin{align}
  \label{eq:S-ell-randomcode}
  \EE_\pi &P_{\text{err}}(\cC_\pi,\zeta) \nonumber\\
     &= \frac1M \sum_{m=1}^M \EE_\pi 
                \tr \left( U^{\pi\dagger}\bigl( \cN^{\ox\ell}(U^\pi\rho_m U^{\pi\dagger},\zeta) \bigr)U^\pi
                                                (\1-D_m) \right)              \nonumber\\
     &= \frac1M \sum_{m=1}^M 
                \tr \left( \cN^{\ox\ell}\bigl( \rho_m, \EE_\pi U^\pi \zeta U^{\pi\dagger} \bigr)
                                                (\1-D_m) \right),
\end{align}
where in the last line we have exploited the $S_\ell$-covariance of the
tensor product channel $\cN^{\ox\ell}$. The crucial feature of the last
expression is that it shows that the error probability that the
jammer can achieve with $\zeta$ is the same as that of the state
\[
  \zeta' = \EE_\pi U^\pi \zeta U^{\pi\dagger}
         = \frac{1}{\ell!} \sum_{\pi\in S_\ell} U^\pi \zeta U^{\pi\dagger}.
\]

This is, by its construction, a permutation-symmetric state, and we can 
apply the de Finetti reduction 
from \cite{CKR}:
\[
  \zeta' \leq (\ell+1)^{|J|^2} \int_{\sigma\in\cS(J)} \mu({\rm d}\sigma)\, \sigma^{\ox\ell} 
         =:   (\ell+1)^{|J|^2}\mathfrak{F},
\]
with a universal probability measure $\mu$ on the states of $J$, whose
detailed structure is given in \cite{CKR}, but which is not going to be
important for us.

Indeed, inserting this into the last line of eq.~(\ref{eq:S-ell-randomcode}),
and using complete positivity of $\cN$, we obtain the upper bound
\[\begin{split}
  \EE_\pi &P_{\text{err}}(\cC_\pi,\zeta) 
      =    \frac1M \sum_{m=1}^M \tr \left( \cN^{\ox\ell}\bigl( \rho_m,\zeta' \bigr) (\1-D_m) \right)\\
     &\leq (\ell+1)^{|J|^2} \frac1M \sum_{m=1}^M 
                                \tr \left( \cN^{\ox\ell}\bigl( \rho_m,\mathfrak{F} \bigr) (\1-D_m) \right)\\
     &=    (\ell+1)^{|J|^2} \int_{\sigma\in\cS(J)} \mu({\rm d}\sigma) \frac1M \sum_{m=1}^M 
                             \tr \Bigl( \cN^{\ox\ell}\bigl( \rho_m,\sigma^{\ox\ell} \bigr) \Bigr. \\[-2mm]
     &\phantom{========================}                                 \Bigl. \cdot(\1-D_m) \Bigr) \\
     &=    (\ell+1)^{|J|^2} \int_{\sigma\in\cS(J)} \mu({\rm d}\sigma) P_{\text{err}}(\cC,\sigma^{\ox\ell})
      \leq (\ell+1)^{|J|^2} \epsilon,
\end{split}\]
where in the last step we have used the assumption that for every
jammer state of the form $\sigma^{\ox\ell}$, the error probability is
bounded by $\epsilon$.
\end{proof}

\medskip
To apply this, we need compound channel codes with error 
decaying faster than any polynomial. This is no problem, as there are
several constructions giving even exponentially small error 
for rates arbitrarily close to the compound channel capacity, both
for classical~\cite{BB:compound,Mosonyi:compound}
and quantum codes~\cite{BBN:q-compound}.

\begin{corollary}
  \label{cor:capacities:AVC-to-compound}
  Let $\cN$ be a QAVC. 
  Its classical random coding capacity is given by
  \[
    C_{\text{rand}}(\cN) = C(\{\cN_\sigma\}_\sigma) 
         = \lim_{\ell\rightarrow\infty} 
             \frac{1}{\ell} \max_{\{p_x,\rho_x^{A^\ell}\}} \inf_{\sigma^J} I(X:B^\ell),
  \]
  where $I(X:B^\ell) = S\left(\sum_x p_x\omega_x\right) - \sum_x p_x S(\omega_x)$ 
  is the Holevo information of the ensemble
  $\left\{p_x,\omega_x=\cN^{\ox\ell}(\rho_x\ox\sigma)\right\}$
  \cite{BB:compound,Mosonyi:compound}.

  Similarly, its quantum random coding capacity is
  \[
    Q_{\text{rand}}(\cN) = Q(\{\cN_\sigma\}_\sigma) 
         = \lim_{\ell\rightarrow\infty} 
             \frac{1}{\ell} \max_{\ket{\phi}^{RA^\ell}} \inf_{\sigma^J} I(R\rangle B^\ell),
  \]
  where $I(R\rangle B^\ell) = S(\Omega^{B^\ell})-S(\Omega)$ 
  is the coherent information of the state
  $\Omega = (\id\ox\cN)(\phi^{RA^\ell}\ox\sigma)$
  \cite{BBN:q-compound}.
  \qed
\end{corollary}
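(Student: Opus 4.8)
The plan is to sandwich $C_{\text{rand}}(\cN)$ between $C(\{\cN_\sigma\}_\sigma)$ from above (already in hand) and from below (still to be shown), and likewise for $Q_{\text{rand}}$. The upper bounds $C_{\text{rand}}(\cN)\le C(\{\cN_\sigma\}_\sigma)$ and $Q_{\text{rand}}(\cN)\le Q(\{\cN_\sigma\}_\sigma)$ are the rightmost inequalities of eq.~(\ref{eq:cap-chains}): restricting the jammer's state to tensor powers $\sigma^{\ox\ell}$ only shrinks the error supremum, so any random code that is good against the QAVC is in particular good against the associated compound channel, where shared randomness does not raise the capacity. It therefore remains to prove $C_{\text{rand}}(\cN)\ge C(\{\cN_\sigma\}_\sigma)$ and $Q_{\text{rand}}(\cN)\ge Q(\{\cN_\sigma\}_\sigma)$, and this is exactly where Propositions~\ref{prop:compound-to-AVC} and~\ref{prop:compound-to-AVC:q} are used.

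Fix a rate $R<C(\{\cN_\sigma\}_\sigma)$. By the compound channel coding theorem in its strong form (\cite{BB:compound}, see also~\cite{Mosonyi:compound}) there is, for all large $\ell$, a code $\cC_\ell$ of block length $\ell$ and size $M_\ell\ge 2^{\ell R}$ with $\epsilon_\ell:=\sup_{\sigma\in\cS(J)}P_{\text{err}}(\cC_\ell,\sigma^{\ox\ell})\le 2^{-c\ell}$ for some $c>0$. Feeding $\cC_\ell$ into the symmetrisation of Proposition~\ref{prop:compound-to-AVC} produces a random code $(\cC_{\ell,\pi})_{\pi\in S_\ell}$ for the QAVC of the same rate whose worst-case error over all jammer states on $J^\ell$ is at most $(\ell+1)^{|J|^2}\epsilon_\ell\le(\ell+1)^{|J|^2}2^{-c\ell}\to 0$, since a fixed power of $\ell$ is dominated by an exponential decay. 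Hence every $R<C(\{\cN_\sigma\}_\sigma)$ is an achievable random coding rate for $\cN$; letting $R\uparrow C(\{\cN_\sigma\}_\sigma)$ and combining with eq.~(\ref{eq:cap-chains}) yields $C_{\text{rand}}(\cN)=C(\{\cN_\sigma\}_\sigma)$. The quantum statement is obtained identically, with $P_{\text{err}}$ replaced by $\widehat{F}$, the compound coding theorem replaced by its quantum counterpart~\cite{BBN:q-compound} (again with exponentially small infidelity), and Proposition~\ref{prop:compound-to-AVC} replaced by Proposition~\ref{prop:compound-to-AVC:q}; this also reproves the quantum case of~\cite[Appendix~A]{e-ass-Q}.

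The closed-form expressions are then simply the known regularised capacity formulas for the compound channel: the Holevo-type formula $C(\{\cN_\sigma\}_\sigma)=\lim_{\ell\to\infty}\frac1\ell\max_{\{p_x,\rho_x^{A^\ell}\}}\inf_\sigma I(X:B^\ell)$ of~\cite{BB:compound,Mosonyi:compound}, and the coherent-information formula $Q(\{\cN_\sigma\}_\sigma)=\lim_{\ell\to\infty}\frac1\ell\max_{\ket{\phi}^{RA^\ell}}\inf_\sigma I(R\rangle B^\ell)$ of~\cite{BBN:q-compound}; existence of the limits (superadditivity in $\ell$ of the bracketed quantities) is part of those theorems. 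The only delicate point in the whole argument — and it is mild — is that one must invoke the compound channel results in the form producing error that decays faster than every polynomial in $\ell$, not merely the bare capacity statement: it is precisely the polynomial de Finetti overhead $(\ell+1)^{|J|^2}$ of Propositions~\ref{prop:compound-to-AVC}/\ref{prop:compound-to-AVC:q} that has to be absorbed. Since the cited constructions give exponentially small error at every rate below capacity, no difficulty arises, and the corollary follows.
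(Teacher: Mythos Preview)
Your proof is correct and follows exactly the route the paper intends: the upper bounds come from eq.~(\ref{eq:cap-chains}), the lower bounds from Propositions~\ref{prop:compound-to-AVC} and~\ref{prop:compound-to-AVC:q} applied to compound-channel codes with exponentially small error so as to absorb the polynomial de Finetti factor $(\ell+1)^{|J|^2}$, and the explicit formulas are quoted from \cite{BB:compound,Mosonyi:compound,BBN:q-compound}. The paper itself does not spell out a proof beyond the sentence preceding the corollary noting that exponentially small error is available; your write-up simply makes that sentence explicit.
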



\section{Capacity dichotomy: \protect\\ Elimination of correlation from random codes}
For classical AVCs or AVQCs with classical jammer, the observations
of Ahlswede \cite{Ahlswede:elim} show that the random coding capacity can
always be attained using at most $O(\log\ell)$ bits of shared randomness. 
This is done by i.i.d.~sampling the shared random variable $\lambda$,
thus approximating, for each channel state $\sigma$, 
$\EE_\lambda P_{\text{err}}(\cC_\lambda,\sigma)$ by an empirical mean 
over $n$ realisations of $\lambda$, except with probability exponentially 
small in $n$. Then, the union bound can be used because the jammer has
``only'' exponential in $\ell$ many choices.
On the face of it, this strategy looks little promising for QAVCs: the
jammer's choices form a continuum, and even if we realise that we can
discretise $\cS(J^\ell)$, any net of states is exponentially large in the
dimension \cite{AliceBob+Banach}, i.e.~doubly exponentially large 
in $\ell$, resulting in a naive bound of $O(\ell)$ for the shared 
randomness required.
However, the linearity of the quantum formalism comes to our rescue.

\begin{observation}
  \label{obs:op}
  From the point of view of the jammer, the error probability of a classical
  code is an observable, $P_{\text{err}}(\cC,\sigma) = \tr\sigma E$, 
  with a POVM element $E=E(\cC)$ depending in a systematic way on the code.
  Likewise, the infidelity of a quantum code can be written
  $\widehat{F}(\cQ,\sigma) = \tr\sigma G$ for a POVM element $G =  G(\cQ)$.
\end{observation}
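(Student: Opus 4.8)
The plan is to observe that, with the code held fixed, the error figure of merit is a \emph{linear} -- not merely affine -- functional of the jammer's state, and to invoke the finite-dimensional duality $\cL(J^\ell)^*\cong\cL(J^\ell)$: a positive linear functional on the matrix algebra $\cL(J^\ell)$ that is dominated by the trace functional is represented by a unique Hermitian operator lying between $0$ and $\1$, i.e.\ by one element of a two-outcome POVM. The whole content of the observation is this representation, together with keeping track of how the representing operator is built from the code.

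For a classical code $\cC=\{(\rho_m,D_m):m=1,\dots,M\}$, the quantity
\[
  P_{\text{err}}(\cC,\sigma)=\frac1M\sum_{m=1}^M\tr\bigl(\cN^{\ox\ell}(\rho_m\ox\sigma)(\1-D_m)\bigr)
\]
is visibly linear in $\sigma$. Passing to the Heisenberg picture through the adjoint channel $(\cN^{\ox\ell})^{\dg}$ (completely positive and unital) and tracing out $A^\ell$ yields $P_{\text{err}}(\cC,\sigma)=\tr(\sigma E)$ with
\[
  E=E(\cC):=\frac1M\sum_{m=1}^M\tr_{A^\ell}\!\Bigl[(\rho_m\ox\1_{J^\ell})\,(\cN^{\ox\ell})^{\dg}(\1-D_m)\Bigr].
\]
Hermiticity and $E\ge0$ follow not from this formula but from the definition: for $\sigma\ge0$ one has $\cN^{\ox\ell}(\rho_m\ox\sigma)\ge0$ and $\1-D_m\ge0$ (since $0\le D_m\le\sum_{m'}D_{m'}=\1$), hence $\tr(\sigma E)\ge0$; moreover $E\le\1$ because trace preservation of $\cN^{\ox\ell}$ gives $P_{\text{err}}(\cC,\sigma)\le\frac1M\sum_m\tr(\rho_m\ox\sigma)=\tr\sigma$ for every density operator $\sigma$. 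Thus $\{E,\1-E\}$ is a genuine POVM.

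The quantum case is entirely parallel. Put $\Omega_\sigma:=\bigl(\id_R\ox\cD\circ\cN^{\ox\ell}_\sigma\circ\cE\bigr)(\Phi_L)$ -- with $\cN^{\ox\ell}_\sigma$ the map $X\mapsto\cN^{\ox\ell}(X\ox\sigma)$, $\sigma\in\cS(J^\ell)$ -- so that $\widehat F(\cQ,\sigma)=1-\tr(\Omega_\sigma\Phi_L)$. The functional $\sigma\mapsto\tr(\Omega_\sigma\Phi_L)$ is linear in $\sigma$, is $\ge0$ for $\sigma\ge0$ (all maps involved are positive and $\Phi_L\ge0$), and is $\le\tr\sigma$ because $\tr\Omega_\sigma=\tr\sigma$ (all channels preserve the trace) and $0\le\Phi_L\le\1$. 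Hence $\tr(\Omega_\sigma\Phi_L)=\tr(\sigma G')$ for a unique Hermitian $0\le G'\le\1$ -- obtained concretely by running $\Phi_L$ back through the adjoints $\cD^{\dg}$, $(\cN^{\ox\ell})^{\dg}$, $\cE^{\dg}$ and tracing out $R\ox A^\ell$ -- and since $1=\tr\sigma$ on density operators, $\widehat F(\cQ,\sigma)=\tr\bigl(\sigma(\1-G')\bigr)=:\tr(\sigma G)$ with $0\le G\le\1$.

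Finally, the ``systematic'' dependence should be recorded in the form that will actually be used later, namely $S_\ell$-covariance: combining the $S_\ell$-covariance of $\cN^{\ox\ell}$ already exploited in the proof of Proposition~\ref{prop:compound-to-AVC} with cyclicity of the partial trace, a short computation from the explicit formulas gives $E(\cC_\pi)=U^\pi E(\cC)U^{\pi\dg}$ and $G(\cQ_\pi)=U^\pi G(\cQ)U^{\pi\dg}$, where $U^\pi$ now denotes the permutation representation on $J^\ell$. I expect no genuine obstacle; the one point needing a little care is the upper bound $E,G\le\1$ (as opposed to mere positivity), which is exactly where trace preservation of $\cN^{\ox\ell}$ and the subnormalisation $\Phi_L\le\1$ enter -- and it is precisely the fact that $E$ and $G$ are \emph{bounded} operators, uniformly in $\sigma$, that will let a matrix concentration bound carry out the derandomisation in the next step.
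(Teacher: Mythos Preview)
Your proof is correct and follows essentially the same route as the paper: pass to the Heisenberg picture via the adjoint channel to read off the explicit formula for $E$ (identical to the paper's) and for $G$ (the paper works directly with $\1-\Phi_L$ rather than via your intermediate $G'$, but these agree by unitality of the adjoints). Your careful justification of $0\le E,G\le\1$ improves on the paper's bare ``manifestly'', while the $S_\ell$-covariance remark at the end, though correct, goes beyond what the Observation claims and is not actually invoked in the paper's subsequent derandomisation.
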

 
\begin{proof}
  Indeed, using the Heisenberg picture (adjoint map) $\cN^{\ast}$,
  \[\begin{split}
    P_{\text{err}}(\cC,\sigma) 
        &= \frac1M \sum_{m=1}^M \tr \left(\cN^{\ox\ell}(\rho_m\ox\sigma)\right)\!(\1-D_m) \\
        &\!\!\!\!\!
         = \tr \sigma\!\left[ \frac1M \sum_{m=1}^M \tr_{A^\ell} (\rho_m\ox\1)\!
                                                             \left(\cN^{\ast\ox\ell}(\1-D_m)\right) \right],
  \end{split}\]
  so that 
  \(
    E = \frac1M \sum_{m=1}^M \tr_{A^\ell} (\rho_m\ox\1)\left(\cN^{\ast\ox\ell}(\1-D_m)\right),
  \)
  which is manifestly a POVM element, i.e.~$0 \leq E \leq \1$.

  Likewise, for the infidelity,
  \[\begin{split}
    \widehat{F}(\cQ,\sigma) 
        &= \tr \bigl( (\id\ox\cD\circ\cN^{\ox\ell}\circ\cE)(\Phi_L\ox\sigma) \bigr)\!\cdot\!(\1-\Phi_L) \\
        &\!\!\!\!\!
         = \tr (\Phi_L\ox\sigma)\!\cdot\!
                           \left( (\id\ox\cE^\ast\circ\cN^{\ast\ox\ell}\circ\cD^\ast)(\1-\Phi_L)\right) \\
        &\!\!\!\!\!
         = \tr\sigma G,
  \end{split}\]
  with 
  \(
    G = \tr_{AA'} (\Phi_L\ox\1)\left( (\id\ox\cE^\ast\circ\cN^{\ast\ox\ell}\circ\cD^\ast)(\1-\Phi_L)\right).
  \)

  Obviously, for a random classical code $(\cC_\lambda)$, the expected error
  probability is
  \[
    \EE_\lambda P_{\text{err}}(\cC_\lambda,\sigma) = \tr\sigma (\EE_\lambda E_\lambda),
  \]
  with the POVM elements $E_\lambda = E(\cC_\lambda)$ associated to each
  code $\cC_\lambda$. Likewise for a random quantum code. 
\end{proof}

\medskip
For a random classical code $(\cC_\lambda)$, the jammer's goal is to maximise
the error probability, choosing $\sigma$ in the worst possible way.
But from the present perspective that the error probability is
an observable for Jamie, it is clear that 
$\sup_\sigma \EE_\lambda P_{\text{err}}(\cC_\lambda,\sigma)$ is
simply the maximum eigenvalue of $\overline{E} = \EE_\lambda E_\lambda$.

We say, following general convention, that a random classical or quantum
code $(\cC_\lambda)$ or $(\cQ_\lambda)$ \emph{has error $\epsilon$}
(without reference to any specific state of the jammer) if
\[
  \sup_\sigma \EE_\lambda P_{\text{err}}(\cC_\lambda,\sigma) \leq \epsilon
  \ \text{ or }\ 
  \sup_\sigma \EE_\lambda \widehat{F}(\cQ_\lambda,\sigma) \leq \epsilon,
\]
respectively. By the above discussion is equivalent to
\begin{equation}
  \EE_\lambda E_\lambda \leq \epsilon\1
  \ \text{ or }\ 
  \EE_\lambda G_\lambda \leq \epsilon\1,
\end{equation}
in the sense of the operator order. This is an extremely useful way of
characterising that the random code has a given error.

\bigskip
Our goal now is to select a ``small'' number of $\lambda$'s, say
$\lambda_1,\ldots,\lambda_n$, such that
\begin{equation}
  \label{eq:elimineeschn}
  \frac1n \sum_{\nu=1}^n E_{\lambda_\nu} \leq (\epsilon+\delta)\1,
\end{equation}
ensuring that the random code $(\cC_{\lambda_\nu})_{\nu=1}^n$, with
uniformly distributed $\nu\in[n]$, has error probability $\epsilon+\delta$.
This is precisely the situation for which the matrix tail bounds
in \cite{A-W} were developed. Indeed, quoting \cite[Thm.~19]{A-W},
for i.i.d.~$\lambda_\nu \sim P_\lambda$,
\[
  \Pr\left\{ \frac1n \sum_{\nu=1}^n E_{\lambda_\nu} \not\leq (\epsilon+\delta)\1 \right\}
                        \leq |J|^\ell \cdot \exp\bigl(-n D(\epsilon+\delta\|\epsilon) \bigr),
\]
with the binary relative entropy 
$D(u\|v) = u\ln\frac{u}{v}+(1-u)\ln\frac{1-u}{1-v}$, which can be lower 
bounded by Pinsker's inequality, $D(u\|v) \geq 2(u-v)^2$. Note that 
both the logarithm ($\ln$) and the exponential ($\exp$) are
understood to base $e$. 

Thus, for $n > \frac{1}{2\delta^2}(\ln|J|)\ell$, the right hand
probability bound above is less than $1$, so that there exist 
$\lambda_1,\ldots,\lambda_n$ with (\ref{eq:elimineeschn}).
The number of bits needed to be shared between Alice and Bob
to achieve this, is $\log n$, which we may choose to be
$\leq \log\ell - 2\log\delta + \log\ln|J|$, which is
not zero, but has zero rate as $\ell\rightarrow\infty$.
Exactly the same argument applies to a random quantum code $(\cQ_\lambda)$.
We record this as a quotable statement. 

\begin{proposition}
  \label{prop:q-elimineeschn}
  Let $(\cC_\lambda:\lambda\in\Lambda)$
  be a random classical code of block length $\ell$ for the QAVC 
  $\cN:A\ox J \longrightarrow B$,
  with error probability $\epsilon$.
  Then for $\delta>0$, there exist $\lambda_1,\ldots,\lambda_n \in \Lambda$,
  with $n \leq 1 + \frac{1}{2\delta^2}(\ln|J|)\ell$, such that the
  random code $(\cC_{\lambda_\nu}:\nu\in_R[n])$ has error probability
  $\leq \epsilon+\delta$.
  
  For a random quantum code $(\cQ_\lambda:\lambda\in\Lambda)$,
  with infidelity $\epsilon$, we similarly have that the random code
  $(\cQ_{\lambda_\nu}:\nu\in_R[n])$ has infidelity
  $\leq \epsilon+\delta$.
  \qed
\end{proposition}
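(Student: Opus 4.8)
The plan is to combine Observation~\ref{obs:op} with the operator Chernoff bound of \cite{A-W} exactly as sketched in the preceding paragraphs, organising the argument so that it proves both the classical and the quantum statement simultaneously. First I would invoke Observation~\ref{obs:op} to replace the random code $(\cC_\lambda)_{\lambda\in\Lambda}$ by the family of POVM elements $E_\lambda = E(\cC_\lambda)$ on $J^\ell$, and translate the hypothesis ``error probability $\epsilon$'' into the operator inequality $\EE_\lambda E_\lambda \leq \epsilon\1$. The point is that this single operator inequality on a space of dimension $|J|^\ell$ captures the worst-case behaviour over all jammer states $\sigma$, since $\sup_\sigma \tr\sigma(\EE_\lambda E_\lambda)$ is just the largest eigenvalue of $\EE_\lambda E_\lambda$.

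Next I would draw $\lambda_1,\ldots,\lambda_n$ i.i.d.\ from $P_\lambda$ and apply \cite[Thm.~19]{A-W} to the i.i.d.\ operator-valued random variables $E_{\lambda_\nu}$, whose expectation is bounded by $\epsilon\1$: this gives
\[
  \Pr\left\{ \frac1n \sum_{\nu=1}^n E_{\lambda_\nu} \not\leq (\epsilon+\delta)\1 \right\}
        \leq |J|^\ell \cdot \exp\bigl(-n\, D(\epsilon+\delta\|\epsilon)\bigr)
        \leq \exp\bigl(\ell\ln|J| - 2n\delta^2\bigr),
\]
where the second step uses Pinsker's inequality $D(u\|v)\geq 2(u-v)^2$. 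For $n > \frac{1}{2\delta^2}(\ln|J|)\,\ell$ the right-hand side is strictly less than $1$, hence there exists a concrete choice of $\lambda_1,\ldots,\lambda_n\in\Lambda$ realising $\frac1n\sum_\nu E_{\lambda_\nu}\leq(\epsilon+\delta)\1$; taking $n = 1 + \lfloor\frac{1}{2\delta^2}(\ln|J|)\ell\rfloor$ meets the stated bound. Reading the resulting operator inequality back through Observation~\ref{obs:op}, the derandomised code $(\cC_{\lambda_\nu}:\nu\in_R[n])$ has $\sup_\sigma \EE_\nu P_{\text{err}}(\cC_{\lambda_\nu},\sigma)\leq\epsilon+\delta$, which is exactly the claim.

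For the quantum part I would simply observe that the entire argument is insensitive to the nature of the code: Observation~\ref{obs:op} furnishes, for a random quantum code $(\cQ_\lambda)$, POVM elements $G_\lambda = G(\cQ_\lambda)$ on $J^\ell$ with $\widehat F(\cQ_\lambda,\sigma) = \tr\sigma G_\lambda$ and $\EE_\lambda G_\lambda\leq\epsilon\1$; repeating the i.i.d.\ sampling and the operator Chernoff bound verbatim with $G$ in place of $E$ yields the same conclusion for the infidelity. I do not anticipate a genuine obstacle here — the conceptual work has already been front-loaded into Observation~\ref{obs:op}, which reduces the continuum of jammer states to a single finite-dimensional operator inequality. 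The only point requiring a little care is making sure the hypotheses of \cite[Thm.~19]{A-W} are literally met (i.i.d.\ samples, operators bounded between $0$ and $\1$, expectation dominated by $\epsilon\1$, and the dimension factor $|J|^\ell$ correctly identified), and tracking the additive ``$+1$'' so that the bound $n\leq 1+\frac{1}{2\delta^2}(\ln|J|)\ell$ holds as an integer statement.
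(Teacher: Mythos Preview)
Your proposal is correct and follows essentially the same approach as the paper: reduce the worst-case error to an operator inequality via Observation~\ref{obs:op}, apply the matrix tail bound \cite[Thm.~19]{A-W} to i.i.d.\ samples together with Pinsker's inequality, and choose $n$ just large enough to make the failure probability strictly less than~$1$. The paper's own argument is exactly this, presented in the paragraphs immediately preceding the proposition rather than as a separate proof block.
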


\medskip
\begin{remark}
We have discussed here from the beginning the version of the capacity
with average probability of error (and arbitrary encodings). Following
Ahlswede~\cite{Ahlswede:elim} and the generalisation of his method above,
investing another $O(\log \ell)$ bits of shared randomness, or 
loosing $O(\log \ell)$ bits from the code, we can convert any code with 
error $\epsilon$ into one with \emph{maximum error} $\leq 2\epsilon$. 
We omit the details of this argument, as it is exactly as in \cite{Ahlswede:elim}.
\end{remark}

\medskip
Proposition~\ref{prop:q-elimineeschn} allows us to assess the leftmost 
inequalities in the capacity order from eq.~(\ref{eq:cap-chains}).
Because the randomness needed is so little, it can be generated by
a channel code loosing no rate. Hence,
in a certain sense, they are also identities, except in the somewhat
singular case the deterministic classical capacity vanishes:

\begin{corollary}
  \label{cor:dichotomy}
  The classical capacity of a QAVC $\cN$ is either $0$ or, if
  it is positive, it equals the random coding capacity:
  \[
    C_{\text{det}}(\cN) = \begin{cases} C_{\text{rand}}(\cN) & \text{if } C_{\text{det}}(\cN)>0, \\
                                                           0 & \text{otherwise}. 
                          \end{cases}
  \]
  Similarly, for the quantum capacity:
  \[
    \phantom{=====}
    Q_{\text{det}}(\cN) = \begin{cases} Q_{\text{rand}}(\cN) & \text{if } C_{\text{det}}(\cN)>0, \\
                                                           0 & \text{otherwise}. 
                          \phantom{=======:}\qed
                          \end{cases}
  \]  
\end{corollary}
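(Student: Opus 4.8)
The plan is to derive the dichotomy as an essentially formal consequence of the preceding results, most notably Proposition~\ref{prop:q-elimineeschn} (elimination with $O(\log\ell)$ shared randomness) together with Corollary~\ref{cor:capacities:AVC-to-compound} (equality of random and compound capacities). The guiding idea is Ahlswede's robustification/elimination trick: a random code needs only logarithmically many bits of shared randomness, and logarithmically many bits can be transmitted \emph{for free} by a deterministic code \emph{provided} the deterministic capacity is positive, i.e.~provided we can send any growing number of bits at all. So the only way the dichotomy can fail is if $C_{\text{det}}(\cN)=0$, which is precisely the ``otherwise'' branch.

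First I would treat the classical case. Fix $\epsilon>0$ and a target rate $R<C_{\text{rand}}(\cN)$. For $\ell$ large there is a random classical code $(\cC_\lambda)$ of block length $\ell$, rate $\geq R$, and error $\leq\epsilon/2$; by Proposition~\ref{prop:q-elimineeschn} (with $\delta=\epsilon/2$) we may assume $\Lambda=[n]$ with $n\leq 1+\frac{2}{\epsilon^2}(\ln|J|)\ell$, so that $\log n = O(\log\ell)$. Now suppose $C_{\text{det}}(\cN)>0$: then for some block length $\ell_0$ there is a deterministic code $\cC^{(0)}$ for $\cN^{\ox\ell_0}$ of some fixed positive rate with error $\to 0$; concatenating $\lceil \log n/(\text{rate}\cdot\ell_0)\rceil = O(\log\ell)$ independent uses of $\cC^{(0)}$ lets Alice transmit the index $\nu\in[n]$ to Bob deterministically, with total error $o(1)$, using a prefix block of length $O(\log\ell)$. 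Prepending this prefix to the eliminated random code $(\cC_{\lambda_\nu})$ yields a \emph{deterministic} code for $\cN$ of block length $\ell + O(\log\ell)$ and error $\leq \epsilon + o(1)$. Since the prefix has length $O(\log\ell)=o(\ell)$, the rate of the combined code is still $\geq R - o(1)$. Letting $\ell\to\infty$ and then $\epsilon\to 0$, $R\to C_{\text{rand}}(\cN)$, we get $C_{\text{det}}(\cN)\geq C_{\text{rand}}(\cN)$; combined with eq.~(\ref{eq:cap-chains}) this gives equality when $C_{\text{det}}(\cN)>0$.

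For the quantum case the argument is the same but with one twist: the prefix that conveys $\nu$ must still be a \emph{classical} transmission (Bob only needs to learn the classical index $\nu$ to pick $\cD\circ\cU_{\pi^{-1}_\nu}$), so what is needed is a short deterministic \emph{classical} block code, i.e.~the condition is again $C_{\text{det}}(\cN)>0$ and not $Q_{\text{det}}(\cN)>0$ — which is exactly how the corollary is stated. Concretely: take a random quantum code $(\cQ_\lambda)$ of rate $\geq R<Q_{\text{rand}}(\cN)$ and infidelity $\leq\epsilon/2$, eliminate down to $n = O(\ell)$ realisations via the quantum half of Proposition~\ref{prop:q-elimineeschn}, transmit $\nu$ through an $O(\log\ell)$-length deterministic classical prefix (available since $C_{\text{det}}(\cN)>0$), and run $\cQ_{\nu}$ afterwards; the infidelity is subadditive enough — a union-bound / telescoping estimate on the two stages — to stay $\leq\epsilon+o(1)$, and the rate loss is $o(1)$. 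If instead $C_{\text{det}}(\cN)=0$, then in particular no positive classical rate is achievable, and since quantum codes in particular yield classical codes of the same rate (send basis states, measure in the basis), $Q_{\text{det}}(\cN)=0$ as well, giving the second branch.

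The main obstacle — really the only non-bookkeeping point — is handling the prefix cleanly: one must make sure that (i) the number of eliminated codes is genuinely only $\mathrm{poly}(\ell)$ (which is what Proposition~\ref{prop:q-elimineeschn} guarantees, so the index costs only $O(\log\ell)$ bits), (ii) the jammer acting jointly across the prefix block and the main block cannot do better than acting on each separately — this follows because the relevant error/infidelity is an affine functional of the jammer state (Observation~\ref{obs:op}), so the worst joint jammer state is no worse than the worst product of marginals, and one can bound the combined error by the sum of the two stage errors — and (iii) the $O(\log\ell)$ prefix length is $o(\ell)$ so no rate is lost. I would present (ii) briefly via Observation~\ref{obs:op} and otherwise refer to \cite{Ahlswede:elim}, since the elimination-of-correlation bookkeeping is verbatim as in the classical case.
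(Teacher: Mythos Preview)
Your proposal is correct and follows exactly the paper's (one-line) argument preceding the corollary: Proposition~\ref{prop:q-elimineeschn} reduces the shared randomness to $O(\log\ell)$ bits, which a deterministic prefix code can deliver at zero asymptotic rate cost whenever $C_{\text{det}}(\cN)>0$; your handling of the ``otherwise'' branch via $Q_{\text{det}}\leq C_{\text{det}}$ is also the intended one. One small fix in your point~(ii): affinity of the error in $\sigma$ (Observation~\ref{obs:op}) does \emph{not} by itself force the worst joint jammer state to be a product---what actually makes the union bound on the two stage errors go through is that $\cN^{\ox(m+\ell)}=\cN^{\ox m}\ox\cN^{\ox \ell}$, so the marginal output on each block depends only on the corresponding marginal of the jammer's state, irrespective of entanglement across the blocks.
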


\section{Discussion and Outlook}
We have shown that in a fully quantum jammer channel model (QAVC),
the random coding capacity, for both quantum and classical transmission,
can be reduced to the capacity of a corresponding compound channel;
furthermore, by extension of the ``elimination of correlation''
technique, that the shared randomness required has zero rate, thus
implying dichotomy theorems for the deterministic classical and
quantum capacities.
Since the derandamisation leaves so little randomness,
we can apply the results also to say something about the identification
capacity of QAVCs: Either the ID-capacity vanishes, or it 
equals the random coding capacity $C_{\text{rand}}(\cN)$.

Our work leaves two important open questions: First, to give 
necessary and sufficient conditions for vanishing classical capacity.
For classical AVCs this is the co-called ``symmetrizability''
\cite{Ericsson,CN:symmetrize}.
But what is the analogue of this condition for quantum channels?

Second, both parts of our reasoning relied on the finite dimensionality
of the jammer system $J$.
It is not so clear how to deal with infinite dimension of $J$,
on the other hand. A priori we have a problem already in
Proposition~\ref{prop:compound-to-AVC}, since the de Finetti reduction 
has an upper bound depending on the dimension $|J|$.
However, one can prove the random coding capacity theorem directly
from first principles, without recourse to de Finetti reductions.

Then, we have the problem again in the derandomisation step, which
requires bounded $|J|$ to apply the matrix tail bound. 
We need some kind of quantum net argument
to be able to go to a finite dimensional subspace $J' < J$ that 
somehow approximates the relevant features of $\cN$ up to 
error $\eta$ and block length $\ell$. 
Classically, the finiteness of the alphabet of channel states is
irrelevant, as long as we have finite sender and receiver 
alphabets. The reason is that for each block length $\ell$ we can 
choose a subset of channel states of size polynomial in $\ell$,
corresponding to an $\frac{\eta}{\ell}$-net of channels realised by the 
jammer, for any fixed $\eta > 0$. Indeed, for the QAVC with classical jammer,
which may be described by a state set $\cS$, the following statements
are easily obtained by standard methods.

\begin{lemma}
  \label{lemma:discretise-S}
  For every $\eta>0$, there exists a set $\cS' \subset \cS$ of
  cardinality $|\cS'| \leq \left( \frac{10|A|^2}{\eta} \right)^{2|A|^2|B|^2}$, 
  with the property that for every $s\in\cS$ there is an $s'\in\cS'$ with
  $\frac12 \|\cN_s-\cN_{s'}\|_\diamond \leq \eta$, where the
  norm is the diamond norm (aka completely bounded trace norm)
  on channels~\cite{diamond,Watrous:diamond}.
  \qed
\end{lemma}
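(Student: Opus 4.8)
The plan is to realise each induced channel $\cN_s$ as a point of the finite-dimensional real vector space $V$ of all complex-linear maps $\cL(A)\rar\cL(B)$, and to net the bounded set of channels sitting inside $V$. Note that $\dim_\RR V = 2|A|^2|B|^2$, which is exactly the exponent appearing in the claim. Equip $V$ with the induced trace norm $\|\Phi\|_{1\rar1} := \sup\{\|\Phi(X)\|_1 : X=X^\dagger,\ \|X\|_1\leq 1\}$; every cptp map satisfies $\|\cN_s\|_{1\rar1}=1$, so $\{\cN_s : s\in\cS\}$ lies in the unit ball $B_1\subseteq V$ of this norm. The only genuinely analytic input is the standard estimate relating the diamond norm to it, namely $\|\Phi\|_\diamond \leq |A|\,\|\Phi\|_{1\rar1}$ for every Hermiticity-preserving $\Phi$ (see \cite{diamond,Watrous:diamond}); in particular $\tfrac12\|\cN_s-\cN_{s'}\|_\diamond \leq \tfrac{|A|}{2}\|\cN_s-\cN_{s'}\|_{1\rar1}$ for any two induced channels.

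Second, I would invoke the elementary volumetric covering bound: any subset of a radius-$R$ ball in a $d$-dimensional normed space admits a $\gamma$-net of cardinality at most $(1+2R/\gamma)^d$. Applied to $B_1\subseteq V$ with $R=1$, $d = 2|A|^2|B|^2$ and $\gamma := \eta/|A|$, this produces a finite family $\cT$ of channels with $|\cT| \leq (1+2|A|/\eta)^{2|A|^2|B|^2}$ such that every channel, in particular every $\cN_s$, lies within $\|\cdot\|_{1\rar1}$-distance $\gamma$ of some element of $\cT$.

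Third, I would transfer $\cT$ to a subset of $\cS$: delete from $\cT$ every element whose $\gamma$-ball is disjoint from $\{\cN_s:s\in\cS\}$, and from each surviving element $t$ pick one index $s(t)\in\cS$ with $\|\cN_{s(t)}-t\|_{1\rar1}\leq\gamma$; set $\cS' := \{s(t):t \text{ surviving}\}\subseteq\cS$, so that $|\cS'|\leq|\cT|$. For arbitrary $s\in\cS$ there is a surviving $t\in\cT$ with $\|\cN_s-t\|_{1\rar1}\leq\gamma$, hence a corresponding $s'=s(t)\in\cS'$, and the triangle inequality combined with the diamond-norm estimate gives $\tfrac12\|\cN_s-\cN_{s'}\|_\diamond \leq \tfrac{|A|}{2}\bigl(\|\cN_s-t\|_{1\rar1}+\|t-\cN_{s'}\|_{1\rar1}\bigr) \leq \tfrac{|A|}{2}\cdot 2\gamma = \eta$. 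Finally $1+2|A|/\eta \leq 10|A|^2/\eta$ since $|A|\geq1$ and one may assume $\eta\leq1$ (for $\eta\geq1$ the statement is trivial, as $\tfrac12\|\cN-\cM\|_\diamond\leq1$ for any two channels, so a one-element $\cS'$ works), which delivers the stated cardinality bound. The only place where care is needed is the constant bookkeeping in the last lines, and in citing the two off-the-shelf facts used — the diamond-norm inequality and the covering-number estimate — cleanly enough to land under $10|A|^2/\eta$; this is what is meant by "standard methods".
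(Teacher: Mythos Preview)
Your argument is correct and lands on the stated bound, but it proceeds by a genuinely different route than the paper. The paper first constructs an $\eta/2$-net for $\text{CPTP}(A\rightarrow B)$ directly in the diamond norm, via the Choi--Jamio\l{}kowski isomorphism: it quotes a trace-norm net of states on $A\otimes B$ from \cite{rand}, restricts to the affine subset of Choi states, and then converts trace-norm closeness of Choi states into diamond-norm closeness of the associated channels by the explicit inversion formula, paying a factor $|A|^2$. The pullback to $\cS'\subset\cS$ is then done exactly as you do it, picking one preimage per nonempty cell and doubling the radius by the triangle inequality.

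You instead net the unit ball of the induced $1\!\to\!1$ norm on $\cL(\cL(A),\cL(B))$ by a bare volumetric argument and pass to the diamond norm via $\|\Phi\|_\diamond\leq |A|\,\|\Phi\|_{1\to1}$, paying only a factor $|A|$. Both approaches give the same exponent $2|A|^2|B|^2$, since the real dimension of the map space coincides with that of the state space on $A\otimes B$; your route is a bit more elementary (no Choi detour) and even yields a nominally sharper base before you relax it to $10|A|^2/\eta$. One cosmetic slip: the net $\cT$ you produce consists of arbitrary points of $B_1$, not channels, so ``a finite family $\cT$ of channels'' is inaccurate as written---but your pruning-and-selection step shows you are aware of this and handle it correctly.
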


By applying this lemma with $\frac{\eta}{\ell}$, the ``telescoping trick''
and the triangle inequality to bound
$\frac12 \left\| \cN^{\ox\ell}_{s^\ell} - \cN^{\ox\ell}_{\sigma^\ell} \right\|_\diamond$
for $s^\ell\in\cS^\ell$ and $\sigma^\ell\in{\cS'}^\ell$, we obtain then:

\begin{lemma}
  \label{lemma:discretise-jammer}
  For every $\eta>0$ and integer $\ell$, there exists a subset
  $\cS' \subset \cS$ of cardinality
  $|\cS'| \leq \left( \frac{10|A|^2\ell}{\eta} \right)^{2|A|^2|B|^2}$, such that
  \[\begin{split}
    \sup_{\sigma^\ell\in{\cS'}^\ell} \EE_\lambda P_{\text{err}}(\cC,\sigma^\ell) 
        &\leq \sup_{s^\ell\in{\cS}^\ell} \EE_\lambda P_{\text{err}}(\cC,s^\ell)          \\
        &\leq \sup_{\sigma^\ell\in{\cS'}^\ell} \EE_\lambda P_{\text{err}}(\cC,\sigma^\ell)
  \end{split}\]
  for any random code $(\cC_\lambda:\lambda\in\Lambda)$. 
  Similar for the infidelity of random quantum codes.
  \qed
\end{lemma}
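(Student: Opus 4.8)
The plan is to derive Lemma~\ref{lemma:discretise-jammer} from
Lemma~\ref{lemma:discretise-S} by a standard ``telescoping'' argument,
applied to the single-letter diamond-norm net at resolution
$\frac{\eta}{\ell}$. First I would invoke Lemma~\ref{lemma:discretise-S}
with the parameter $\frac{\eta}{\ell}$ in place of $\eta$: this yields a
finite set $\cS' \subset \cS$ of cardinality at most
$\left( \frac{10|A|^2\ell}{\eta} \right)^{2|A|^2|B|^2}$ such that every
$s\in\cS$ has a partner $s'\in\cS'$ with
$\frac12\|\cN_s - \cN_{s'}\|_\diamond \leq \frac{\eta}{\ell}$. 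This
immediately gives the cardinality bound claimed in the lemma. The left
inequality in the lemma is then trivial, since ${\cS'}^\ell \subseteq
\cS^\ell$ and taking a supremum over the larger set only increases the
value.

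The substance is the right inequality. Given any $s^\ell =
(s_1,\ldots,s_\ell) \in \cS^\ell$, I pick for each coordinate $i$ a
net point $\sigma_i \in \cS'$ with $\frac12\|\cN_{s_i} -
\cN_{\sigma_i}\|_\diamond \leq \frac{\eta}{\ell}$, forming
$\sigma^\ell = (\sigma_1,\ldots,\sigma_\ell) \in {\cS'}^\ell$. I then
bound $\frac12\bigl\|\cN^{\ox\ell}_{s^\ell} - \cN^{\ox\ell}_{\sigma^\ell}\bigr\|_\diamond$
by telescoping: write the difference of the two product channels as a
sum of $\ell$ terms, where the $i$-th term replaces $\cN_{s_j}$ by
$\cN_{\sigma_j}$ in coordinates $j<i$ (or $j>i$), leaving a single
factor $\cN_{s_i} - \cN_{\sigma_i}$ tensored with fixed channels on the
other coordinates. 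Using submultiplicativity and stability of the
diamond norm under tensoring with a fixed channel, each term has
diamond norm at most $2\cdot\frac{\eta}{\ell}$, so the triangle
inequality gives
$\frac12\bigl\|\cN^{\ox\ell}_{s^\ell} - \cN^{\ox\ell}_{\sigma^\ell}\bigr\|_\diamond
\leq \eta$. Finally, the error probability $P_{\text{err}}(\cC,\cdot)$
is, by Observation~\ref{obs:op}, of the form $\tr(\cdot\, E)$ composed
with the channel, hence it changes by at most the diamond-norm distance
of the channels when the jammer state is replaced; averaging over
$\lambda$ preserves this, so
$\bigl|\EE_\lambda P_{\text{err}}(\cC,s^\ell) - \EE_\lambda P_{\text{err}}(\cC,\sigma^\ell)\bigr|
\leq \eta$, which after taking suprema yields the right inequality. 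The
infidelity case is identical, using the POVM-element representation
$\widehat F(\cQ,\sigma) = \tr\sigma G$ from Observation~\ref{obs:op}.

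The main obstacle, such as it is, is purely bookkeeping: making the
telescoping precise requires being careful that at each step the
``surrounding'' channels on the untouched coordinates are genuine cptp
maps (so that the diamond norm is multiplicative in the right way), and
that the reduction from a diamond-norm bound on channels to a bound on
$P_{\text{err}}$ correctly accounts for the code's encoding states
$\rho_m$ and the messages average. None of this is deep — it is exactly
the argument in \cite{Ahlswede:elim} adapted to the quantum channel
metric — which is why I would simply state the two lemmas and defer the
routine verification, as the excerpt already does.
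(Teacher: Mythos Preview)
Your proposal is correct and follows essentially the same route as the paper's own argument: apply Lemma~\ref{lemma:discretise-S} at resolution $\eta/\ell$, telescope the tensor-product channels to get $\frac12\|\cN^{\ox\ell}_{s^\ell}-\cN^{\ox\ell}_{\sigma^\ell}\|_\diamond\leq\eta$, and convert this into a bound on the error probability. The only cosmetic difference is that the paper phrases the last step directly in terms of output-state trace distance rather than invoking Observation~\ref{obs:op}; also note that the displayed right-hand inequality in the statement is missing a ``$+\,\eta$'', which is what both your argument and the paper's actually establish.
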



Since we need to entangle both the $A$'s and the $J$'s, 
it seems that the most natural approach is to answer the following question.

\begin{question}
\label{question-four}
\normalfont
Let $\cN:\cL(A\ox J)\longrightarrow \cL(B)$ be a cptp map with
finite dimensional $A$ and $B$, and $\eta>0$. Is it possible
to find a subspace $J' \subset J$ of dimension bounded
by some polynomial in $\eta^{-1}$, with the following property?

\begin{quote}
For every Hilbert space $K$ and state $\sigma$ on $J\ox K$,
there exists another state $\sigma'$ on $J'\ox K$ such that
$\frac12 \| \cN_\sigma - \cN_{\sigma'} \|_\diamond \leq \eta$.
\end{quote}
Here, $\cN_\sigma$ and $\cN_{\sigma'}$ are channels from $A$ to $B\ox K$, 
defined by inserting the respective state into the jammer register:
\begin{align*}
  \cN_\sigma(\rho)    := (\cN\ox\id_K)(\rho\ox\sigma), \ 
  \cN_{\sigma'}(\rho) := (\cN\ox\id_K)(\rho\ox\sigma').
\end{align*}
\end{question}

We can reduce this to the more elementary question of approximating 
the output of the ``Choi channel'' $\Gamma:\cL(J) \rightarrow \cL(C)$, 
with $C=A\ox B$,
defined by $\Gamma(\sigma) = (\id_A\ox\cN)(\Phi^{AA'}\ox\sigma)$,
mapping each $\sigma$ to the Choi state of the channel $\cN_\sigma$:
Namely, the question is whether for every Hilbert space $K$ and 
state $\sigma$ on $J\ox K$, does there exist a state 
$\sigma'$ on $J'\ox K$ such that
\(
  \displaystyle
  \frac12 \| (\Gamma\ox\id)(\sigma-\sigma') \|_1 \leq \widetilde{\eta} := \eta/|A|^2 ?
\)

\medskip
We now show that a positive answer to Question~\ref{question-four}, 
with deviation $\frac{\eta}{\ell}$, could be used to replace
the $\ell$ environments of $\cN^{\ox\ell}$ in $\ell$ steps each by a
finite dimensional approximation. In this way, we would be able to
find, for every state $\sigma$ on $J^\ell$, another state $\sigma'$ on 
${J'}^\ell$, with 
\begin{equation}
  \label{eq:approxi}
  \frac12 \left\| \left(\cN^{\ox\ell}\right)_\sigma 
                  - \left(\cN^{\ox\ell}\right)_{\sigma'} \right\|_\diamond \leq \eta.
\end{equation}

\begin{proof}
Set $\sigma^{(0)} := \sigma$; we shall define a sequence of
approximants $\sigma^{(i)}$ on ${J'}^{\ox i}\ox J^{\ell-i}$ 
($i=1,\ldots,\ell$), as follows:

To obtain $\sigma^{(1)}$, we apply Question~\ref{question-four} with
$K=J^{\ox\ell-1}$ (the last $\ell-1$ of the $J$-systems) to obtain
\[
  \frac12 \left\| \cN^{[1]}_{\sigma^{(0)}} - \cN^{[1]}_{\sigma^{(1)}} \right\|_\diamond 
        \leq \frac{\eta}{\ell},
\]
where the notation $\cN^{[i]} = \id^{\ox i-1} \ox \cN \ox \id^{\ox \ell-i}$
indicates application of the channel to the $i$-th system in $J^\ell$.
Proceeding inductively, assume that we already have constructed a state
$\sigma^{(i-1)}$ on ${J'}^{\ox i-1} \ox J^{\ox \ell-i+1}$, Question~\ref{question-four}
applied to $K={J'}^{\ox i-1} \ox J^{\ox \ell-i}$ (i.e.~all the $J'$ systems and
the last $\ell-i$ of the $J$'s) gives us a state
$\sigma^{(i+1)}$ on ${J'}^{\ox i} \ox J^{\ox \ell-i}$ such that
\[
  \frac12 \left\| \cN^{[i]}_{\sigma^{(i-1)}} - \cN^{[i]}_{\sigma^{(i)}} \right\|_\diamond 
         \leq \frac{\eta}{\ell}.
\]

Since the diamond norm is contractive under composition with cptp maps,
we obtain for all $i=1,\ldots,\ell$ that
\[
  \frac12 \left\| \left(\cN^{\ox\ell}\right)_{\sigma^{(i-1)}} 
                - \left(\cN^{\ox\ell}\right)_{\sigma^{(i)}} \right\|_\diamond \leq \frac{\eta}{\ell},
\]
and via the triangle inequality we arrive at eq.~(\ref{eq:approxi}), by
letting $\sigma' := \sigma^{(\ell)}$ and recalling $\sigma = \sigma^{(0)}$.
\end{proof}

\medskip
This would mean that any behaviour that the jammer
can effect by choosing states on $J^\ell$, can be approximated
up to $\pm\eta$ (on block length $\ell$) by choices from ${J'}^\ell$,
analogously to Lemma~\ref{lemma:discretise-jammer}, which actually
provides a positive answer to Question \ref{question-four} in the case
of a classical jammer. 
Since $|J'|$ is bounded polynomially in $\ell$, we could apply 
now Proposition \ref{prop:q-elimineeschn} and incur an additional 
term of $O(\log\ell)$ in the shared randomness required, in particular
it will still be of zero rate.

\medskip
A third complex of questions concerns the extension of the present
results to other quantum channel capacities. This is easy along
the above lines for cases like the entanglement-assisted capacity 
(cf.~\cite{BJK:E-assisted,DSW:feedback-0}), but challenging for
others, such as the private capacity~\cite{CWY,Devetak}. 
This is interesting because
the error criterion (of decodability and privacy) does not seem
to correspond to an observable on the jammer system. We leave this
and the other open problems for future investigation.

\bigskip
\textbf{Acknowledgments.}
HB and CD were supported by the German
BMBF through grants~16KIS0118K and 16KIS0117K.
JN was supported by the German BMWi and ESF, grant~03EFHSN102.
AW was supported by the ERC Advanced Grant IRQUAT,
the Spanish MINECO, projects no. FIS2013-

\vfill\pagebreak\noindent
40627-P and FIS2016-86681-P, 
and the Generalitat de Catalunya, CIRIT project 2014-SGR-966.

\end{document}